\theoremstyle{plain}
\newtheorem{thm}{Theorem}[section]
\newtheorem{prop}[thm]{Proposition}
\theoremstyle{definition}
\newtheorem{definition}{Definition}[section]
\newtheorem{example}{Example}[section]
\theoremstyle{remark}
\newtheorem*{remark}{Remark}
\newcommand{\ecfsm}{CAA}
\title{Communicating Actor Automata -- Modelling Erlang Processes as Communicating Machines}
\author{%
Dominic Orchard\institute{University of Kent, UK}\email{d.a.orchard@kent.ac.uk}\and%
Mihail Munteanu\institute{Masabi Ltd.}\email{mihailmunteanu944@gmail.com} \and%
Paulo Torrens\institute{University of Kent, UK}\email{paulotorrens@gnu.org}%
}
\date{\today}
\newcommand{\titlerunning}{Communicating Actor Automata}
\newcommand{\authorrunning}{D. Orchard, M. Munteanu \& P. Torrens}
\begin{document}

\maketitle

\begin{abstract}
Brand and Zafiropulo’s notion of Communicating Finite-State Machines (CFSMs) provides \linebreak a succinct and powerful model of message-passing concurrency, based around channels. However, a major variant of message-passing concurrency is not readily captured by CFSMs: the actor model. In this work, we define a variant of CFSMs, called Communicating Actor Automata, to capture the actor model of concurrency as provided by Erlang: with mailboxes, from which messages are received according to repeated application of pattern matching. Furthermore, this variant of CFSMs supports dynamic process topologies, capturing common programming idioms in the context of actor-based message-passing concurrency. This gives a new basis for modelling, specifying, and verifying Erlang programs. We also consider a class of CAAs that give rise to freedom from race conditions.

\end{abstract}

\section{Introduction}
\label{sec:introduction}
Modern software development often deviates from the traditional approach of sequential computation and thrives on \emph{concurrency}, where code is written such that several processes may run simultaneously while potentially sharing resources. Out of the increasing complexity of software systems, platforms and programming languages were created to facilitate the development of concurrent, parallel, and distributed computations, such as the Erlang programming language~\cite{armstrong1997development,armstrong2007history}. The need for formal specification of such systems has motivated the design of formal systems that allow programs to be reasoned about.

The Communicating Finite-State Machines (CFSMs) of Brand and Zafiropulo (also known as \emph{communicating automata}) provide a model for describing concurrent, communicating processes in which a notion of \emph{well-formed} communication protocols can be described~\cite{brand1983communicating}. The essential idea is to model a (finite) set of concurrent processes as finite automata (one automaton per process) whose labels correspond to sending and receiving messages on channels connecting each pair of automata, collectively called a \emph{protocol}. Through such precise descriptions, properties of communicating processes can be studied, e.g., checking whether every message is received, or whether no process is left awaiting for a message which is never sent.

This model is useful for further studying the decidability, or undecidability, of various properties of concurrent systems (e.g.,~\cite{pachl2003reachability,peng1992analysis,finkel2017synchronizability,gouda1984closed,gouda1984progress}). For example, Brand and Zafiropulo show that boundedness (i.e., that communication can proceed with bounded queues), deadlock freedom, and unspecified reception (sending messages that aren't received) are all decidable properties when restricted to two machines with a single type of message~\cite{brand1983communicating}. Furthermore, deciding these properties can be computed in polynomial time~\cite{rosier1984deciding}, and can be computed when only one machine is restricted to a single type of message. CFSMs have also been employed more recently as a core modelling tool that provides a useful interface between other models of concurrent programs, such as graphical choreographies~\cite{DBLP:conf/popl/LangeTY15}.

In the CFSM model, processes communicate via channels (FIFO queues) linking each process. Therefore a process knows from which other process a message is received. This differs to the actor model where each process has a `mailbox' into which other processes deposit messages, not necessarily with any information about the sender. This makes it difficult to capture actor-based approaches in traditional CFSMs. A further limitation of CFSMs is that they capture programs with fixed communication topologies: both sender and receiver of a message are fixed in the model, and a process cannot have its messages dynamically targeted to different processes. However, this is not the predominant programming idiom in concurrent programming settings. CFSMs also prescribe simple models of message reception, and do not capture more fine-grained reception methods, such as Erlang's mailbox semantics which allow processing messages other than the most recent one, leveraging pattern matching at the language level. Even so, CFSMs are tantalisingly close to Erlang's computational model, with every pair of processes representing sequential computation that is able to communicate bidirectionally.

We propose a variant of CFSMs to capture Erlang's asynchronous
mailbox semantics, and furthermore allow dynamic topologies through
the binding (and rebinding) of variables for process identifiers
via a notion of memory within a process' automaton. Section~\ref{sec:caa-definition} explicates the model including examples of
models corresponding to simple Erlang programs. We consider properties
of such models in Section~\ref{sec:compatibility}. Section~\ref{sec:discussion}
concludes with a discussion, some related work, and next steps.

\subsection{Communicating Finite-State Machines}

To facilitate comparison, we briefly recap the formal definition of
CFSMs~\cite{brand1983communicating}. A system of $N$-CFSMs is referred
to as a \emph{protocol} which comprises four components (usually
represented as a $4$-tuple):
\begin{itemize}
\item $(S_i)^{N}_{i = 1}$ are $N$ (disjoint) finite sets $S_i$
giving the set of states of each process $i$;

\item $(o_i)^N_{i = 1}$ where $o_i \in S_i$
are the initial states of each process $i$;

\item $(M_{i,j})^N_{i,j=1}$ are $N^2$ (disjoint) finite sets where
$M_{i,j}$ represents the set of messages that can be sent from process $i$
to process $j$, and where $M_{i,j} = \emptyset$ when $i = j$;

\item $(\mathsf{succ} : (S_i \times \bigcup^N_{j=1}(M_{i,j} \uplus M_{j,i}))
\rightarrow S_i)^N_{i = 1}$ are $N$
state transition functions (\emph{partial} functions) where
$\mathsf{succ}(s, l)$ computes the successor state $s'$ for process
$i$ from the state $s$ and given a message $l$ that is either being
sent from $i$ to $j$ (thus $l \in M_{i,j}$) or being received from $j$
by $i$ (thus $l \in M_{j,i}$).
\end{itemize}
The typical presentation views the above indexed sets as finite
sequences. We use the notation $l$ for messages as we later
refer to these as being `labels' of automata.

For a protocol, a \emph{global state} (or configuration) is a pair
$(S, C)$ of a sequence of states for each process $S = \langle{s_1 \in
S_1, \ldots s_N \in S_N}\rangle$ and $C$ is an $N \times N$ matrix
whose elements $c_{i,j} \in C$ are finite sequences drawn from
$M_{i,j}$ representing a FIFO queue (channel) of messages between
process $i$ and $j$ (here and later we use $\cdot$ to concatenate
such sequences and $[l_1, \ldots, l_m]$ for an instance
of a sequence with $m$ messages).

A binary-relation \emph{step} captures when one global state
$(S, C)$ can evolve into another global state $(S', C')$ due to a
single $\mathsf{succ}$ function. That is, $(S', C')
\in \mathsf{step}(S, C)$ iff there exists $i$, $k$, $l_{i,k}$ and
 either:
\begin{align*}
& \begin{array}{rll}
1. & \text{ ($i$ sends to $k$):} & s'_i \in S' = \mathsf{succ}_i(s_i, l_{i,k})
\ \text{and}\ c'_{i,k} \in C' = c_{i,k} \cdot [l_{i,k}] \\
\textbf{or}\;\; 2. & \text{ (reception from $k$ by $i$):} & s'_k \in S' = \mathsf{succ}_k(s_k, l_{i,k})\ \text{and}\
c_{i,k} \in C = [ l_{i,k} ] \cdot c'_{i, k}
\end{array}
\end{align*}
We adopt similar terminology and structure, but
vary enough to capture  the actor model of Erlang.

\section{A variant of CFSMs for Erlang}
\label{sec:caa-definition}
Our main goal is to define a CFSM variant, which we call Communicating Actor Automata (CAA), by borrowing from Erlang's mailbox semantics. We first review some core Erlang concepts, and follow by describing CAA and their composition into protocols. While the traditional definition of CFSMs immediately considers a global configuration (state) of some processes, we take each process (representing an Erlang actor) as a separate state machine, which gives a local ``in-isolation'' characterisation from which the global ``protocol'' characterisation is derived.

\subsection{Erlang basic definitions}
\label{subsec:erlang-basics}

\newcommand{\varX}{X}
\newcommand{\varY}{Y}
\newcommand{\tmT}{t}
\newcommand{\tmP}{p}
\newcommand{\tmQ}{q}
\newcommand{\patP}{pat}

A key concept in Erlang is that of a \emph{mailbox}: instead of processing
messages strictly in FIFO order, each process (also referred to as an
\emph{actor}) possesses a queue of incoming messages from which they may
\emph{match}: given a sequence of patterns, a process picks the first message
from the queue that unifies with one of those patterns, in an ordered fashion.
If no pattern matches the first message in the queue, the next message is tried
for all patterns, and so on~\cite{carlsson2001introduction}. In the following,
we will use Erlang's term syntax in an abstract way (for details, see Carlsson
et al.~\cite{carlsson2000core}), whose actual choice may vary. Concrete syntax is given in \texttt{monospace}
font, e.g., $\texttt{\{1, 2\}}$ is an Erlang tuple of two integer terms.

\begin{definition}[Syntactic categories]
  Let $\textit{Term}$ be the set of terms, ranged over by $\tmT$. Let $\textit{Var} \subset \textit{Term}$ be the set of variables, ranged over by uppercase letters. Let $\mathit{Proc} \subset \textit{Var}$ be the set of process identifiers which uniquely identify processes. Let $\mathit{Pat}$ be the set of patterns, ranged over by $\patP$. 
  As Erlang has a call-by-value semantics, we define a subset $\textit{Value}\subset\textit{Term}$ of terms which
  are normal forms (called \emph{values}), ranged over by $v$. Notice that $\textit{Var}\subset\textit{Value}$.
\end{definition}

We note that, in regard to the semantics proposed in this paper, we mostly focus
on four basic kinds of terms: process identifiers, variables, atoms and tuples. While
identifiers and variables allow us to control the topology, atoms and tuples are useful for
structuring message patterns (aiding in identifying which message is to be sent
or received). In the example that will be given in Section
\ref{subsec:multi}, we shall also consider integers and arithmetic operators,
but that's not necessary. It would also be possible to consider functions in the
term syntax, but we won't entertain this possibility in this paper as we intend to
focus on the expressivity of the process automaton itself and not on the term
language.

\newcommand{\unify}{\rhd}

In order to mimic Erlang's method of receiving messages, we need a notion of
unification: incoming messages are matched against a set of patterns, and will
proceed only if one of those is accepted.

\begin{definition}[Unification]
We define the notion of \emph{unification}
\cite{carlsson2001introduction,carlsson2000core} between a term and a
pattern written $\tmT \unify \patP$ which either yields $\bot$
representing failure to pattern match or it yields an
\emph{environment} which is a finite map from a subset of variables $V
\subseteq \textit{Var}$ to terms, i.e., $\Gamma_V : V \rightarrow
\textit{Term}$ represents the binding context of a successful pattern
match. Such maps are ranged over by $\gamma$. If $t \unify pat = \gamma$, with $\gamma \in \Gamma_V$ for some $V$, then $pat$ becomes equal to $t$ if
we replace every variable $v \in V$ in it by $\gamma_V(v)$. We write
$\{X_1\mapsto t_1,\ ...,\ X_n\mapsto t_n\}$ to denote the environment
that maps $X_i$ to $t_i$ (for all $1 \leq i \leq n$).
\end{definition}

%



\subsection{\ecfsm{}s for individual processes}
\label{subsec:single}

Just as in CFSMs, we describe actors by finite-state automata. Each automaton is
described individually and represents a single Erlang process, having its own
unique identifier. Our main intention is to capture possible states in an Erlang
process, by saying which messages it's allowed to receive and to send at a given
point during execution, and what it should do after it.

We formally define our notion of actor as follows (note
the addition of final states, not always included for CFSMs).

\begin{definition}
  A Communicating Actor Automata (\ecfsm{}) is a 7-tuple $(S,\ o,\ F,\ 
  \mathcal{L}_!,\ \mathcal{L}_?,\ \delta, <)$, which includes a finite set of
  states $S$, an initial state $o \in S$, a possibly empty set of final states
  $F \subseteq S$, a set of send labels $\mathcal{L}_! \subseteq \mathit{Term}$,
  a finite set of receive labels $\mathcal{L}_? \subseteq \mathit{Pat}$, a
  function $\delta : S \times (\mathcal{L}_? \cup (\mathit{Var} \times
  \mathcal{L}_!)) \rightarrow S$, \linebreak describing transitions,
  and an $S$-indexed family of order relations $<$ on $\mathcal{L}_?$. We impose a
  restriction on the domain of $\delta$ such that a state may only have either
  some number of receive labels or a single send label (but never both).
  We write $\delta(s, ?\mathit{pat})$
  for any transitions which receive a message matching $\mathit{pat}$, and $\delta(s, X!t)$ for
  transitions which send a message, where $X!t$ denotes the pair $(X, t)$ of a
  message given by term $t$ being sent to process $X$. For an state $s \in S$, we
  write $<_s$ as the order relation on such state.
\end{definition}
%
%

Notice a \ecfsm{} is essentially a deterministic finite automata (DFA) state machine with the alphabet defined as $\Sigma = \mathcal{L}_? \cup (\mathit{Var} \times \mathcal{L}_!)$. Non-determinism is exposed by interaction of many \ecfsm{}s in a protocol, which will be described in Section~\ref{subsec:multi}. As such, a notion of a ``static \ecfsm{}'' (as opposed to mobile) can be conceived of as having transitions $\delta : S \times (\mathcal{L}_? \cup (\mathit{Proc} \times \mathcal{L}_!)) \rightarrow S$ where we replace the variables associated with send labels by concrete process identifiers, i.e., the target of a send is always known ahead of time. We do not explore this notion further here.

\begin{example}
\label{exm:mem}
Consider the following Erlang code which defines a function \texttt{mem} which emulates a memory cell via recursion:
\begin{lstlisting}[language=Erlang,basicstyle=\ttfamily\small,keywordstyle=\color{blue}\bf,deletekeywords={get,put}]
mem(S) -> receive
            {get, P} -> P!S, mem(S);
            {put, X} -> mem(X)
          end.
\end{lstlisting}
The state of the memory cell is given by variable \texttt{S}. The process receives either a pair \texttt{\{get, P\}}, after which it sends \texttt{S} to the process identifier \texttt{P}, or a pair \texttt{\{put, X\}}, after which it recurses with \texttt{X} as the argument (the `updated' memory cell state). Note that lowercase terms in Erlang are atoms.

Once spawned, this function can be modelled as a \ecfsm{} with states $S = \{s_0,\ s_1\}$, initial and final states $o = s_0$ and $F = \{s_0\}$, send labels $\mathcal{L}_! = \mathit{Terms}$, receive labels $\mathcal{L}_? = \{\texttt{\{get, P\}},\ \texttt{\{put,
    S\}}\}$, order stating $\texttt{\{get, P\}} <_{s_0} \texttt{\{put, S\}}$, and the following transition (with corresponding automaton):

\vspace{0.5em}

\begin{minipage}{0.5\textwidth}
\centering
\begin{align*}
\begin{array}{ll}
\delta(s_0,\ ?\texttt{\{get, P\}}) & = s_1 \\
\delta(s_0,\ ?\texttt{\{put, S\}}) & = s_0 \\
\delta(s_1,\ \texttt{P} ! \texttt{S}) & = s_0
\end{array}
\end{align*}
\end{minipage}{\scalebox{0.9}{\begin{minipage}{0.5\textwidth}
\centering
\begin{tikzpicture}[shorten >=1pt,node distance=1.7cm and 4cm,on grid,auto]
    \node[state,initial,accepting] (s0)   {$s0$};
    \node[state] (s1) [right=of s0] {$s1$};

    \path[->]
    (s0) edge [bend left] node {\texttt{?\{get, P\}}} (s1)
    (s1) edge [bend left] node {\texttt{P!S}} (s0)
    (s0) edge [loop below] node {\texttt{?\{put, S\}}} ();
\end{tikzpicture}
\end{minipage}}}
\end{example}

\vspace{0.5em}

\newcommand{\sequ}[1]{[#1]}
\newcommand{\sequEmpty}{\epsilon}
\noindent
Notice we don't use the variable $X$ in the above example for the `put' message, as we want to rebind the received value (in the second component of the pair) to $S$ in the recursive call. We do not consider the additional aspect of Erlang's semantics in which already bound variables may appear in pattern matches, incurring a unification, which is a further complication not considered in this paper.

The above definition is enough to capture the static semantics of an actor. However, during execution, further information is needed to represent its dynamic behaviour at runtime: namely the mailbox and the internal state of the actor. We proceed to formally define this.

\begin{definition}
A \emph{local state} (or \emph{machine configuration}) is a triple $(s, m, \gamma)$, being comprised of a state $s$, a finite sequence of terms $m$, and an environment $\gamma$. The sequence of terms $m$ models \emph{message queues}, also called actor \emph{mailboxes}, of unreceived messages, and $\gamma$ is the actor's memory. We write $\sequEmpty$ for the empty sequence and $\sequ{t_1, \ldots, t_n}$ for the sequence comprising
$n$ elements with $t_1$ being the head of the
queue. Two sequences $m,\ m'$ can be appended, written $m \cdot m'$, e.g.,
$\sequ{1,2,3} \cdot \sequ{4,5} = \sequ{1,2,3,4,5}$.
\end{definition}

\subsection{Systems of \ecfsm{}s: protocols,
states, and traces}
\label{subsec:multi}

As with a CFSM model or an Erlang program, computation is described by the communication among concurrent actors.
In order to formally define that, we give an operational semantics to the combination of several CAAs, called a protocol,
through an evaluation step relation between states.

\begin{definition}[Protocol]
A \emph{protocol} is an indexed family of \ecfsm{}s, of finite size (or arity) $N$, written
$\langle{(S_i,\ o_i,\ F_i,\ \mathcal{L}_{!i},\ \mathcal{L}_{?i},\ \delta_i)}\rangle^N_{i = 1}$. Each index $i$ represents the unique process identifier of each process.
\end{definition}

\begin{definition}[Global state]
A \emph{global state} (or \emph{system configuration}) for
a protocol comprises a finite sequence of $N$
local states, written $\langle{(s_i, m_i, \gamma_i)}\rangle^N_{i = 1}$
where every $s_i \in S_i$. We denote the set of global states for a protocol with arity $N$ as $G_N$.
\end{definition}

Given a protocol, we may derive what we call an initial global state: before starting,
each process has an empty mailbox, and is in its initial state as defined in its
own CAA. This initial state is deterministic: for any given protocol, there's a
single possible initial state.

\begin{definition}[Initial global state]
For a protocol, the \emph{initial global state}
is $\langle{(o_i,\ \sequEmpty,\ \emptyset)}\rangle^N_{i = 1}
\in G_N$,
i.e., every machine is in its initial state with an empty mailbox,
and its mapping from variables to process identifiers is empty.
\end{definition}

In order to use the mailbox semantics, we define a partial function that is defined
only if a message may be accepted at the moment. As in Erlang, we look for each message
that's in the mailbox in order, and only try the next message if no currently
accepting pattern matches. While checking each message, patterns are checked in
the defined order and only the first one that matches will be accepted.

\begin{definition}[Pick function]
We implictly assume a CAA $(S,\ o,\ F,\ \mathcal{L}_!,\ \mathcal{L}_?,\ \delta,\ <)$ as our context. Then,
the partial function $pick(s, m, v) = \langle pat, \gamma\rangle$ is defined if and only if:
\begin{itemize}
    \item For all $pat' \in \mathcal{L}_?$, if $\delta(s,?pat)$ is defined, then for
    all $v_k \in m$, we have that $v_k \unify pat = \bot$;
    \item For all $pat' <_s pat$, we have $v \unify pat' = \bot;$ and
    \item There is a $\gamma$ such that $v \unify pat = \gamma$.
\end{itemize}
\end{definition}

Finally, we now can define our semantics through a step relation, which defines what happens
to a global state once one of the current possible transitions is performed.

\begin{definition}[Step relation]
Given a protocol $\langle{(S_i,\ o_i,\ F_i,\ \mathcal{L}_{!i},\ \mathcal{L}_{?i},\
    \delta_i, <_i)}\rangle^N_{i = 1}$,
the relation\footnote{Notice $\mathit{step}$
can be equivalently
thought of as a non-deterministic function
(producing many possible global states)
or as a relation from a single input global state to
many outcome global states. The definition
here declaratively defines this relation.} \emph{step} \linebreak denotes the non-deterministic
transitions of the overall concurrent system,
of type $\mathit{step} : G_N \rightarrow \mathcal{P}(G_N)$. We write $\mathit{step}(c_1) \ni c_2$ if $c_2$ is amongst the possible outcomes of $\mathit{step}(c_1)$, as defined by the following two
rules:

\begin{prooftree}
\RightLabel{(SEND)}
\AxiomC{$\delta_i(s_i, P!t) = s'_i$}
\AxiomC{$\gamma_i(P) = j$}
\AxiomC{$\gamma_i(t) \rightarrow^* v$}
\TrinaryInfC{$\mathit{step}(\langle{(s_i, m_i, \gamma_i)}\rangle^N_{i = 1}) \quad\ni\quad \langle{...,
(s'_i, m_i, \gamma_i), ...,
(s_j, m_j \cdot \sequ{v}, \gamma_j), ...}\rangle$}
\end{prooftree}

\vspace{0.25em}

\begin{prooftree}
\RightLabel{(RECV)}
\AxiomC{$\delta_j(s_j, ?\mathit{pat}) = s'_j$}
\AxiomC{$m_j = m \cdot \sequ{v} \cdot m'$}
\AxiomC{$pick(s_j, m, v) = \langle pat, \gamma\rangle$}
\TrinaryInfC{$\mathit{step}(\langle{(s_i, m_i, \gamma_i)}\rangle^N_{i = 1}) \quad\ni\quad \langle{...,
(s'_j, m \cdot m', \gamma_j \cup \gamma), ...}\rangle$}
\end{prooftree}
\end{definition}

\noindent
The first rule, (SEND), actions a \emph{send} transition with label $P!t$
for a process $i$ in state $s_i$ resulting
in the state $s'_i$. We lookup the
variable $P$ from the process identifier environment
$\gamma_i$ to get the process identifier
we are sending to, i.e., $\gamma_i(P) = j$, which means we
are sending to process $j$. Notice that as $\gamma_i$ is a finite map, $\gamma_i(P) = P$ if $P$ is not in the domain of $\gamma_i$. By abuse of notation, $\gamma(t)$ replaces occurrences of variables of $\gamma$ in $t$,
allowing for the use of internal state, and then we evaluate the resulting term to a value $v$ (which is denoted by the reduction relation $\rightarrow^*$). Subsequently in the resulting
global configuration, process $i$ is now in state $s'_i$
and process $j$ has the message $v$ enqueued onto the
end of its mailbox in its configuration, while the states for any other processes is kept the same.

The second rule, (RECV), actions a \emph{receive} transition with
label $\mathit{pat}$ for a process $j$ in state $s_j$
resulting in the process $j$ being in state $s'_j$
under the condition that the mailbox $m_j$ contains
a message $v$ at some point with prefix $m$ and suffix
$m'$. We use the $pick$ function to check the semantic constraints: we want to
pick the first message $v$ for which some possible pattern matches, and the first
one that does so. If this is the correct pattern, given state, prefix and value, then the value term $v$ will unify with $\mathit{pat}$ to produce
an binding $\gamma$.  Subsequently, process $j$ is now in state
$s'_j$ with $v$ removed from its mailbox and its process
identifier environment updated to $\gamma_j \cup \gamma$: its internal state gets updated by any terms matched while receiving the message. Any states for processes other than $j$ remains the same.

Following that, a way to describe a possible result for computation is through a trace, which represents a sequence of steps from the initial global state into one possible final state (as \emph{step} is non-deterministic, several outcomes are possible). We proceed by formally defining a notion of trace.

\begin{definition}[Trace]
A \emph{trace} is a sequence of system configurations such that the first one is the initial global state, and
the subsequent states are obtained from applying the \emph{step} relation onto the previous configuration.
We call a sequence of global states $T$ a trace if it has the form $T = \langle{t_0, t_1, ...,
    t_x}\rangle$ and satisfies the following conditions:
\begin{itemize}
\setlength{\itemsep}{1pt}
\setlength{\parskip}{0pt}
\setlength{\parsep}{0pt}
\item $t_0 = \langle{(o_i, \sequEmpty, \emptyset)}\rangle^N_{i = 1}$
\item  $\mathit{step}(t_i) \ni t_{i+1}$
\item $\mathit{step}(t_x) = \emptyset$ where $t_x$ is the last term in the trace sequence.
\end{itemize}
\end{definition}


\begin{example}
    \label{exm:trace}
    Recall the \texttt{mem} function from Example~\ref{exm:mem}, to
    which we assign process id $0$.
    We consider a protocol comprises four machines, with three further machines
    with process identifiers $1$, $2$ and $3$, given by the following definitions:
    {\small{
    \begin{align*}
    \setlength{\arraycolsep}{0.5em}
    \begin{array}{lllll}
        S_1 = \{c_0, c_1, c_2, c_3\} &
        o_1 = c_0 &
        F_1 = \{c_3\} &
        \mathcal{L}_{!1} = \{\texttt{\{get, 1\}}, \texttt{\{put, X+1\}}\} &
        \mathcal{L}_{?1} = \{\texttt{X}\}
    \end{array} \\
    \begin{array}{lll}
        \delta_1(c_0, 0!\texttt{\{get, 1\}}) = c_1 &
        \delta_1(c_1, ?\texttt{X}) = c_2 &
        \delta_1(c_2, 0!\texttt{\{put, X+1}\}) = c_3
        \end{array} \\[1em]
        \begin{array}{lllll}
        S_2 = \{d_0, d_1, d_2, d_3\} &
        o_2 = d_0 &
        F_2 = \{d_3\} &
        \mathcal{L}_{!2} = \{\texttt{\{get, 2\}}, \texttt{\{put, X+2\}}\} &
        \mathcal{L}_{?2} = \{\texttt{X}\}
    \end{array} \\
        \begin{array}{lll}
        \delta_2(d_0, 0!\texttt{\{get, 2\}}) = d_1 &
        \delta_2(d_1, ?\texttt{X}) = d_2 &
        \delta_2(d_2, 0!\texttt{\{put, X+2}\}) = d_3
        \end{array} \\[1em]
        \begin{array}{llllll}
        S_3 = \{e_0, e_1\} &
        o_3 = e_0 &
        F_3 = \{e_1\} &
        \mathcal{L}_{!3} = \{\texttt{\{put, 0\}}\} &
        \mathcal{L}_{?3} = \emptyset
        \end{array} \\
        \begin{array}{lll}
          \delta_3(e_0, 0!\texttt{\{put, 0\}}) = e_1 &
        \end{array}
    \end{align*}
    }}

    \noindent
    The first machine above (process id $1$) makes a `get' request to process $0$ then receives
    the result \texttt{X} and sends back to $0$ a `put' message with \texttt{X+1}.
    The second machine above (process id $2$) is similar to the first, requesting the value from
    process $0$ but then sending back a `put' message with \texttt{X+2}. The third
    machine (process id $3$) sends to $0$ a `put' message with the initial value $0$.

We can then get the following trace for the protocol
$\langle{(S_i, o_i, F_i, \mathcal{L}_{!i}, \mathcal{L}_{?i},
  \delta_i)}\rangle^3_{i = 1}$, one of the many possibilities, demonstrating mobility and the ability to store internal state (we underline the parts of the configuration which have changed at
each step of the trace for clarity):

\pagebreak

\begin{align*}
  \begin{array}{rllll}
    & \langle(s_0, \sequEmpty, \emptyset), & (c_0, \sequEmpty, \emptyset), & (d_0, \sequEmpty, \emptyset), & (e_0, \sequEmpty, \emptyset)\rangle \\
    & \langle\underline{(s_0, \sequ{\{\texttt{put}, 0\}}, \emptyset)}, & (c_0, \sequEmpty, \emptyset), & (d_0, \sequEmpty, \emptyset), & \underline{(e_1, \sequEmpty, \emptyset)}\rangle \\
    & \langle\underline{(s_0, \sequEmpty, \{S \mapsto 0\})}, & (c_0, \sequEmpty, \emptyset), & (d_0, \sequEmpty, \emptyset), & (e_1, \sequEmpty, \emptyset)\rangle \\
    & \underline{\langle(s_0, \sequ{\{\texttt{get}, 1\}}, \{S \mapsto 0\})}, & \underline{(c_1, \sequEmpty, \emptyset)}, & (d_0, \sequEmpty, \emptyset), & (e_1, \sequEmpty, \emptyset)\rangle \\
    & \langle\underline{(s_1, \sequEmpty, \{S \mapsto 0, P \mapsto 1\})}, & (c_1, \sequEmpty, \emptyset), & (d_0, \sequEmpty, \emptyset), & (e_1, \sequEmpty, \emptyset)\rangle \\
    & \langle\underline{(s_0, \sequEmpty, \{S \mapsto 0, P \mapsto 1\})}, & \underline{(c_1, \sequ{0}, \emptyset)}, & (d_0, \sequEmpty, \emptyset), & (e_1, \sequEmpty, \emptyset)\rangle \\
    & \langle(s_0, \sequEmpty, \{S \mapsto 0, P \mapsto 1\}), & \underline{(c_2, \sequEmpty, \{X \mapsto 0\})}, & (d_0, \sequEmpty, \emptyset), & (e_1, \sequEmpty, \emptyset)\rangle \\
    & \langle\underline{(s_0, \sequ{\{\texttt{put}, 1\}}, \{S \mapsto 0, P \mapsto 1\})}, & \underline{(c_3, \sequEmpty, \{X \mapsto 0\})}, & (d_0, \sequEmpty, \emptyset), & (e_1, \sequEmpty, \emptyset)\rangle \\
    & \langle\underline{(s_0, \sequ{\{\texttt{put}, 1\}, \{\texttt{get}, 2\}}, \{S \mapsto 0, P \mapsto 1\})}, & (c_3, \sequEmpty, \{X \mapsto 0\}), & \underline{(d_1, \sequEmpty, \emptyset)}, & (e_1, \sequEmpty, \emptyset)\rangle \\
    & \langle\underline{(s_0, \sequ{\{\texttt{get}, 2\}}, \{S \mapsto 1, P \mapsto 1\})}, & (c_3, \sequEmpty, \{X \mapsto 0\}), & (d_1, \sequEmpty, \emptyset), & (e_1, \sequEmpty, \emptyset)\rangle \\
    & \langle\underline{(s_1, \sequEmpty, \{S \mapsto 1, P \mapsto 2\})}, & (c_3, \sequEmpty, \{X \mapsto 0\}), & (d_1, \sequEmpty, \emptyset), & (e_1, \sequEmpty, \emptyset)\rangle \\
    & \langle\underline{(s_0, \sequEmpty, \{S \mapsto 1, P \mapsto 2\})}, & (c_3, \sequEmpty, \{X \mapsto 0\}), & \underline{(d_1, \sequ{1}, \emptyset)}, & (e_1, \sequEmpty, \emptyset)\rangle \\
    & \langle(s_0, \sequEmpty, \{S \mapsto 1, P \mapsto 2\}), & (c_3, \sequEmpty, \{X \mapsto 0\}), & \underline{(d_2, \sequEmpty, \{X \mapsto 1\})}, & (e_1, \sequEmpty, \emptyset)\rangle \\
    & \langle\underline{(s_0, \sequ{\{\texttt{put}, 3\}}, \{S \mapsto 1, P \mapsto 2\})}, & (c_3, \sequEmpty, \{X \mapsto 0\}), & \underline{(d_3, \sequEmpty, \{X \mapsto 1\})}, & (e_1, \sequEmpty, \emptyset)\rangle \\
    & \langle\underline{(s_0, \sequEmpty, \{S \mapsto 3, P \mapsto 2\})}, & (c_3, \sequEmpty, \{X \mapsto 0\}), & (d_3, \sequEmpty, \{X \mapsto 1\}), & (e_1, \sequEmpty, \emptyset)\rangle
    \end{array}
\end{align*}
\end{example}

\section{Characterising CAA systems: race freedom
and convergence}
\label{sec:compatibility}
We consider how to characterise race conditions between CAAs and identify
a subclass of CAA systems which is race free, or exhibiting \emph{convergence}.
To characterise race conditions, we first
define the notion of what possible messages can be observed
as `incoming' to a process at a particular step in a trace. We define this
notion via multisets of messages:

\begin{definition}[Multiset of messages in a mailbox]
  Given a mailbox $m$, we denote by $A_m$ the multiset of elements in $m$, i.e.,
  there is a way of `arranging' the elements of $A_m$ to obtain $m$.
\end{definition}

\begin{definition}[Incoming messages multiset]
Let $G_N$ = $\langle{c_1,...,(s_i,m_i,\gamma_i),...,c_N}\rangle$
be a global configuration with process $i$ at state $s_i$, where the next step
of the system gives $M$ possible configurations:
\begin{align*}
  \textit{step}(G_N) = \{\langle{c_1{_1}, ..., (s_i{_1},m_i{_1},\gamma_i{_1}), ..., c_N{_1}}\rangle, ...
  \langle{c_1{_M}, ..., (s_i{_M},m_i{_M},\gamma_i{_M}), ..., c_N{_M}}\rangle\}
\end{align*}
The \emph{incoming message multiset} $I_i$ represents all the possible incoming messages
for process $i$ defined:
\begin{align*}
   I_i = (\bigcup\limits_{y=1}^{M}A_{m_{iy}}) - A_{m_i}
\end{align*}
i.e., we take the union of all the possible mailbox multisets $A_{m_{iy}}$ for $i$ obtained
after a step is taken, from which is
subtracted (multiset difference) the messages in the mailbox of $i$ before that step. That is,
the incoming messages $I_i$ are all possible messages that can be added to the mailbox of $i$
after taking a step.
\end{definition}

The definition of $I$ is implicitly parameterised by the starting
configuration $G_N$. We will typically superscript $I$ to denote it
occurring at some position $x$ in a trace, i.e., $I^x_i$.

\begin{remark}
We use overloaded notation $m_i = \langle{I^0_i,I^1_i, ..., I^n_i}\rangle$ to represent a mailbox where the first message is drawn from $I^0_i$, second from $I^1_i$ and so on. That is, $m_i = \langle{l_{0i},l_{1i}, ..., l_{ni}}\rangle$ where $l_{xi} \in I^x_i$.

The mailbox of a process $i$ can now be written as a sequence of multisets of incoming messages, more precisely a subsequence of $\langle{I^0_i,I^1_i, ..., I^n_i}\rangle$. Thus, $m_i = \langle{I^{x_0}_i,I^{x_1}_i, ..., I^{x_m}_i}\rangle$, where $x_0 < ... < x_m$. But why a subsequence? If message $l_{xi}$ is consumed, then multiset $I^x_i$ disappears from the mailbox.
\end{remark}

\begin{remark}[Transition function $\delta$ notation overloading] Usually a transition function
  $\delta$ has as its argument a pair of a state and a label. We overload the second part
  of this pair to allow a multiset such that $\delta(s,?I^x_i) = \{s' \vert , \forall l \in I^x_i, \delta (s,?l) = s'\}$, i.e., the set of target states for any of the possible messages in $I^x_i$.
\end{remark}

\begin{definition}[Race condition]
\label{def:race-condition}
For a protocol, a \emph{race condition} represents the scenario in which there exists a state that can consume 2 or more messages from the same $I^x_i$ and cannot consume any messages from the previous mailbox sets.
That is:
\begin{align*}
& \forall y \in [0,x-1] . \;\; \vert \delta(s,?I^y_i) \vert = 0
\; \wedge \; \vert \delta(s,?I^x_i) \vert \geq 2
\end{align*}
Intuitively, if in the current state for a process we can receive 2 or more messages from set $I^x_i$ we are faced with a race condition, since these messages were sent at the same step and could arrive in any order. This is represented by the second part of the above conjunction. However, in order for us to reach $I^x_i$, we need to not consume any messages before that, hence the first part of the conjunction. If a previous multiset of messages has just one message we can consume, the race condition will not take place.

\end{definition}


\paragraph{Trace convergence and race freedom for systems of two automata}

We consider a class of binary CAA systems (i.e., where $N
= 2$) that is race free by showing that its traces always converge, that
is, the system is deterministic. Furthermore, this makes the testing
of a two automata system much easier, as we need only examine one
trace to determine the final state of the system.

We recall that our definition of CAAs allows only for a restricted set of
transitions: if a state has a transition, it can only be either some number of
receive transition, properly ordered, or a single send transition. In fact, Erlang's semantics is
such that transitions cannot be mixed in other ways and should have at
most one send transition from any state, a condition we refer to as
\emph{affine sends}. We use this condition in order to reason about the possibility
of non-determinism in a trace.

In the following, we will assume that
\emph{self-messaging} is disallowed: i.e., given any local state $(s_i, m_i, \gamma_i)$ for $s_i \in S_i$, $\delta(s_i, P!t)$ is undefined for any $P$ such that $\gamma_i(P) = i$\footnote{This restriction is given to avoid both static messages to self, such as in $1!t$, and dynamic ones, such as $P!t$ where $P$ is bound to $1$. A more strict approach would be to require that each process only sends static messages to each other.}.
For a class of binary models where no self messaging is allowed we observe (and formally
prove below) that there will be no race conditions. A pre-requisite of
a race condition (Definition~\ref{def:race-condition}) is that an
actor can receive two incoming messages at the same trace step. By the
stated conditions, there can only be one sender of messages at a time
and therefore there is at most one transition for every state of an
actor, across all global configurations. Thus, if a final state is
reached then there will only be one possible trace to it.

To prove that we can have at most one transition at a specific time for an actor, we need to look at the mailbox. Send actions are affine (i.e., deterministic) and so we only need to show that receive transitions are deterministic. The proof considers two possible scenarios where we have two different global configurations for the system:
\begin{itemize}
    \item Automata in both configurations are in states about to send, or both about to receive. Since both automata are going to perform the same type of action, they will only affect one mailbox; the mailbox configuration cannot diverge in this case.
    \item One automata is in a state about to send and the other to receive. If the actor in the receive state does not have a valid message to consume, the only valid action is the send (i.e., the other configuration progresses), otherwise both actions would impact the same mailbox making these two configurations diverge. However the send state is a ``constructive'' action, which will append a message at the end of the mailbox, while the receive action is ``destructive'' consuming a message from the mailbox, their disjoint nature will result in the convergence of the mailbox.
\end{itemize}

\begin{prop}[Convergence]
Let $X_1 = (S_1, o_1, F_1, \mathcal{L}_{!_1}, \mathcal{L}_{?_1}, \delta_1)$ and $X_2 = (S_2, o_2, F_2, \mathcal{L}_{!_2}, \mathcal{L}_{?_2}, \delta_2)$ form a protocol of two actors, where there are no mixed transitions, only affine sends, and no self messaging.
Such a protocol, if it converges to a final state, will do so deterministically, i.e.,
any global configuration with converge to the same final global configuration, and as a consequence, be race free.
\end{prop}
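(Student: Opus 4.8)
The plan is to reduce the statement to a single-step \emph{diamond property} for the step relation and then appeal to the standard fact that the diamond property implies confluence, from which uniqueness of the final configuration is immediate. Throughout I write $c \to c'$ as shorthand for $\mathit{step}(c) \ni c'$, and $\to^*$ for its reflexive--transitive closure.

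First I would show that at any global state each actor offers \emph{at most one} transition, so that the sole source of non-determinism is the choice of which of the two actors moves. For a state carrying a send label this is exactly the affine-sends hypothesis. For a receive state it follows from determinism of $\mathit{pick}$: with the mailbox $m_j$ fixed, the three conditions defining $\mathit{pick}$ select the first message $v$ of $m_j$ that unifies with some enabled pattern and, for that $v$, the $<_{s_j}$-least pattern it matches; hence the decomposition $m_j = m \cdot \sequ{v} \cdot m'$ and the pattern appearing in (RECV) are uniquely determined.

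Next I would prove the diamond property by cases on the two enabled actions, assuming $c \to c_1$ via actor $1$ and $c \to c_2$ via actor $2$. In the send/send and receive/receive cases the actions touch disjoint components --- a send by one actor writes only to the other's mailbox and updates only its own state, and neither action alters the enabling condition of the other --- so performing them in either order reaches a common $c'$ in one further step from each of $c_1$ and $c_2$. The crux is the send-versus-receive case on the \emph{same} mailbox: actor $1$ in a send state targeting actor $2$, while actor $2$ consumes from $m_2$. Here I would invoke the constructive/destructive disjointness noted before the statement: the send appends its value to the \emph{end} of $m_2$, leaving the prefix inspected by $\mathit{pick}$ unchanged, so after the send actor $2$ still selects the identical message with the identical pattern; conversely the receive deletes a message lying strictly before the appended value, so the send stays enabled. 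Both interleavings therefore yield the same $c'$, closing the diamond. The no-self-messaging and binary hypotheses are precisely what guarantee that a send always targets the \emph{other} actor, so these are the only configurations in which two moves can interact at all.

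Finally, since $\to$ has the one-step diamond property, $\to^*$ is confluent. If some trace reaches a final configuration $f$ (with $\mathit{step}(f) = \emptyset$) and another maximal trace reaches a normal form $f'$, confluence yields a common reduct of $f$ and $f'$; as both are normal forms this forces $f = f'$, giving determinism of the final state. Race freedom then follows directly: in a binary self-messaging-free protocol the only writer to $m_i$ is the other actor through a single affine send, so at most one message is ever added in a step, i.e., $\vert I^x_i\vert \le 1$ for all $x$, whereas the race condition of Definition~\ref{def:race-condition} requires two distinct consumable elements of some $I^x_i$ and hence $\vert I^x_i\vert \ge 2$. I expect the main obstacle to be the send/receive crux case: rigorously verifying that appending at the tail preserves both the definedness of $\mathit{pick}$ and the exact message/pattern it returns (and symmetrically that the destructive receive preserves the send's enabling condition), since this is where the $\mathit{pick}$ semantics, the order $<_{s}$, and FIFO append interact.
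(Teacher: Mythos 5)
Your proposal is correct, but it takes a genuinely different route from the paper. Your first step (each actor offers at most one transition: affine sends, plus determinism of $\mathit{pick}$ over a fixed mailbox) corresponds to the core of the paper's argument, which computes $\vert I^{x}_i\vert \leq 1$ (only the one other actor can send, with a single affine send) and concludes $\vert P_i \vert \leq 1$ for the set of possible post-receive states. But the paper then proceeds by well-founded induction on trace prefixes, asserting that any two traces of the same length are \emph{pointwise identical} --- in effect claiming the global step relation is deterministic. That claim is stronger than the proposition needs and is delicate: when both actors are simultaneously enabled (e.g., both in send states, each targeting the other's mailbox), $\mathit{step}(c)$ genuinely contains two distinct successors, so same-length traces can differ at intermediate configurations; the paper relegates the resolution of such interleavings to the informal ``constructive/destructive'' discussion preceding the proposition rather than handling it inside the induction. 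Your diamond-plus-confluence route addresses exactly this: local determinism makes the only branching the choice of which actor moves, the send/send and receive/receive cases commute by disjointness of the touched components, and the send-versus-receive crux is closed by the $\mathit{pick}$-stability observation that tail-append leaves both the selected message and the $<_s$-least matching pattern unchanged (while the destructive receive, deleting a message strictly before the appended one, preserves the send's enabling). What your approach buys is a proof of precisely the stated conclusion --- uniqueness of the final configuration among normal forms --- without the over-strong pointwise-trace claim, at the cost of having to formalize the $\mathit{pick}$-stability lemma that the paper leaves as prose; you correctly flag that lemma as the main obstacle, and it is indeed the substantive content hidden in the paper's bullet-point discussion. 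Your race-freedom argument via $\vert I^x_i \vert \leq 1$ against Definition~\ref{def:race-condition} matches the paper's reasoning directly.
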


\begin{proof}
We prove our desired goal by showing that any two traces $T_1$ and $T_2$ of the same size are the same. We note that a possible sequence of states must follow a pattern $\langle s_0,...,s_n \rangle$, for some $n$, and show that for any $k \leq n$, the prefix of size $k$ of $T_1$ and $T_2$ is the same. Taking $k$ to be $n$, they are the same. This proof follows by well-founded induction on $k$ (which has an upper bound):

\begin{enumerate}
    \item Base case: we take $k = 1$. Both prefixes should then be $\langle s\rangle$, where $s$ is the initial global state. This follows by definition of a trace, as it deterministically specifies which is the first state
    for any given configuration.
    \item Inductive step: our inductive assumption says that our prefixes $\langle s_0...s_k\rangle$ match. If $k = n$, then we are done. If, however, $k \leq n$, then $s_k$ is not a final state and we have that $T_1 = \langle s_0...s_k t_1\rangle$ and $T_2 = \langle s_0...s_k t_2\rangle$. We must now show that $t_1 = t_2$.
    Since we can only have one other process sending messages, we have that $\forall j \in [1,k+1],  \vert I^{j + 1}_i \vert \leq 1$. Now, since the size of the sets is at most one, consuming a message would nullify the set, therefore we can represents the mailbox as a subsequence of the set $\{I^0_i,I^1_i,..,I^k_i\}$, with $m_i = \langle{I^{x_1}_i,I^{x_2}_i,...I^{x_y}_i}\rangle$ where $x_1 < x_2 < ... < x_y < k + 1$. We find the greatest $z$ such that $\forall j \in [1,z], \vert \delta(s_k, ?I^j_i) \vert = 0$. Let $P_i$ be the set of all possible states after a receive transition. If $j = y$, then $\vert P_i \vert = 0$, so no messages can be consumed and we stay put at the same state, otherwise $P_i = \{s' \vert \forall m \in I^{z + 1}_i, \delta(s_k, ?m) = s'\}$. Since we can receive at most one message, $\vert I^{z + 1}_i \vert \leq 1$ therefore $\vert P_i \vert \leq 1$, having at most one possible transition, so just one possible new state. This can't be zero as we have $t_1$ and $t_2$: thus the list of possible next states has only one member, and $t_1 = t_2$ as expected.
\end{enumerate}
\end{proof}

\paragraph{Compatibility in Erlang}
Early work on CFSMs identified the notion of \emph{compatibility}
between machines as a key step towards guaranteeing progress of
systems~\cite{gouda1984progress}. Compatibility is the automata
analogue of what is commonly known as duality: that every receive has a
corresponding send and vice versa. A pair of compatible, deterministic
machines is then free from deadlock and unspecified
receptions~\cite{gouda1984progress}.

Due to Erlang's design principles, we might not always care about ending up
with an empty mailbox and admit systems as being compatible even if some messages
are not received, leaving remaining messages in the mailbox. For future
work we thus propose `tiers' of final conditions on a system which
characterise, roughly, different levels of compatibility.

In the following, let the set of all possible traces for a system of CAAs be $Y$.

\begin{definition} (Tier 1) A system is \emph{strongly compatible} iff, $ \forall T \in Y , t_{|T|} = \langle{(f_i, \sequEmpty, \gamma_i)}\rangle^N_{i = 1}$ where $f_i \in F_i$, i.e., no process has any message left in their mailboxes and they have reached final states.

\end{definition}

\begin{definition} (Tier 2)
A system is \emph{weakly compatible} iff, $ \forall T \in Y , t_{|T|} = \langle{(f_i, m_i, \gamma_i)}\rangle^N_{i = 1}$ where $f_i \in F_i$, i.e., some processes may have some messages left in their mailbox, but all have reached final states.
\end{definition}

\begin{definition} (Tier 3)
A system is \emph{communication-lacking} iff, $ \forall T \in Y , t_{|T|} = \langle{(s_i, \sequEmpty, \gamma_i)}\rangle^N_{i = 1}$ where $s_i \in {S_i/F_i}$, i.e., processes didn't reach their final states but can't continue because of lack of input.
\end{definition}

\begin{definition} (Tier 4) If none of the previous conditions are met, the system is said to be \emph{incompatible}.

\end{definition}

We remark that the system in Example~\ref{exm:trace} is strongly compatible, as all
possible traces will end up with empty mailboxes and in final states.

\section{Discussion and Related work}
\label{sec:discussion}
Fowler describes a framework for generating runtime monitors for
Erlang/OTP's \texttt{gen\_server} behaviours from multiparty session
types as conceived of in the Scribble
language~\cite{fowler2016erlang}. This leverages the idea, due to
Deni{\'e}lou and Yoshida~\cite{denielou2012multiparty} of projecting
Scribble's global types (multiparty session types) into local types,
and then implementing local types as CFSMs. It is not clear
however to what extent this models Erlang's general mailbox semantics.
This warrants a further investigation.

A classic mantra of Erlang is to ``let it crash'' (or ``let it
fail'').  Our model here does not deal with process failure, although
recent models have incorporated such
aspects~\cite{DBLP:conf/coordination/BocchiLTV22}. In the
model of Bocchi et al.~\cite{DBLP:conf/coordination/BocchiLTV22}, actors communicate via unidirectional links to
their mailboxes, similar to the structure of CFSMs, but with increased
flexibility in the way that steps occur. The approach doesn't integrate
pattern matching or dynamic topologies. Mailbox MSCs (Message
Sequence Charts)~\cite{bollig2021unifying,bouajjani2018completeness},
build a message sequent chart model of processes with a single
incoming channel and matching semantics similar in philosophy to our
model but not directly based in the CFSM tradition. A deeper comparison with our approach is further work. One considerable difference in our
approach is the integration of dynamic topologies by the `memory'
environment $\gamma$ for each process, which enables messages to be sent
to a variable which is a process identifier bound by a preceding receive.

In preliminary work, we have created a tool for extracting a CAA model from
the Erlang code, and also the other way around, generating an Erlang skeleton of
concurrent communicating code from a description of a CAA protocol. Further
work includes developing this into a tool for analysis and specification of Erlang
programs. We also note that due to the possibility to describe mobile processes, we
conjecture that Milner's CPS translations from the $\lambda$-calculus into
the $\pi$-calculus could be adapted to use CAAs as a target language. If that's the
case, then it follows that CAAs describe a Turing-complete model of computation, and we intend
to investigate this possibility.

\paragraph{Acknowledgments}

With thanks to Kartik Jalal for insights coming from model extraction of
CAAs from Erlang, Simon Thompson for his Erlang expertise in the early
days of this idea, and University of Kent undergraduates taking \emph{CO545:
Functional and Concurrent Programming} between 2017-20 for being a
test audience for this model and its use in diagnosing race
conditions and deadlocks. This work was partly supported by EPSRC project
EP/T013516/1 (Granule).  Orchard is also supported in part by the
generosity of Eric and Wendy Schmidt by recommendation of the Schmidt
Futures program.

\nocite{*}
\bibliographystyle{eptcs}
\bibliography{references}

\end{document}